\newcommand{\bydef}{\triangleq}
\newcommand{\tr}{{\it{tr}}}
\def\SNR{{\textsf{SNR}}}
\def\bydef{:=}
\def\bb0{{\mathbb{0}}}
\def\bydef{:=}
\def\bb{{\mathbf{b}}}
\def\bh{{\mathbf{h}}}
\def\bn{{\mathbf{n}}}
\def\bw{{\mathbf{w}}}
\def\bx{{\mathbf{x}}}
\def\by{{\mathbf{y}}}
\def\b0{{\mathbf{0}}}
\def\bA{{\mathbf{A}}}
\def\bB{{\mathbf{B}}}
\def\bC{{\mathbf{C}}}
\def\bH{{\mathbf{H}}}
\def\bI{{\mathbf{I}}}
\def\bbC{{\mathbb{C}}}
\def\bbE{{\mathbb{E}}}
\def\bbR{{\mathbb{R}}}
\def\bydef{:=}
\def\sf0{{\mathsf{0}}}
\def\Nt{{N_t}}
\def\Nr{{N_r}}
\def\nn{\nonumber}
\begin{document}

\newtheorem{thm}{Theorem}
\newtheorem{lemma}{Lemma}
\newtheorem{rem}{Remark}
\newtheorem{exm}{Example}
\newtheorem{prop}{Proposition}
\newtheorem{defn}{Definition}
\newtheorem{cor}{Corollary}
\def\proof{\noindent\hspace{0em}{\itshape Proof: }}
\def\endproof{\hspace*{\fill}~\QED\par\endtrivlist\unskip}
\def\bh{{\mathbf{h}}}
\def\SIR{{\mathsf{SIR}}}
\def\SINR{{\mathsf{SINR}}}

\title{Sub-modularity and Antenna Selection in MIMO systems}
\author{
Rahul~Vaze and Harish Ganapathy
\thanks{Rahul~Vaze is with the School of Technology and Computer Science, Tata Institute of Fundamental Research, Homi Bhabha Road, Mumbai 400005, vaze@tcs.tifr.res.in, Harish Ganapathy is with the Department of Electrical and Computer Engineering, The University of Texas at Austin, Austin, Tx, 78705, harishg@mail.utexas.edu. }
}

\date{}
\maketitle
\noindent
\abstract
In this paper, we show that the optimal receive antenna subset selection problem for maximizing the mutual information in a point-to-point MIMO system is sub-modular.  
Consequently, a greedy step-wise optimization approach, where at each step an antenna that maximizes the incremental gain is added to the existing antenna subset,  is guaranteed to be within a $(1-1/e)$ fraction of the global optimal value. For a single antenna equipped source and destination with multiple relays, we show that the relay antenna selection problem to maximize the mutual information is modular, when complete channel state information is available at the relays. As a result a greedy step-wise optimization approach leads to an optimal solution for the relay antenna selection problem with linear complexity in comparison to the brute force search that incurs exponential complexity.
\section{Introduction}

Transmit/receive antenna selection for point-to-point MIMO channels is a topic that has been extensively studied in literature, see\cite{Sandhu2000, Sanayei2004, Molisch2004,HeathAntSel2001,PaulrajAntSel2003,GershmanAntSel2004,ChenAntSel2010} and references therein.\footnote{The literature in this area is quite extensive and we do not provide a comprehensive list.} With transmit (receive) antenna selection, a subset of the total number of transmit (receive) antennas is chosen for maximizing 
various performance metrics, such as  capacity, reliability or diversity gain and several others. Antenna selection  has  numerous advantages, such as simplified circuitry, less number of transmit chains/power amplifiers etc., and therefore has been an object of interest in theory as well as in practice. With the growing popularity of relay based communication, antenna selection at both the transmitter/receiver, as well as at multiple relays has also
attracted a lot of attention \cite{Bletsas2006,Tanni2008, Peters2007, Vaze2008}. 
In the relay paradigm, in addition to usual advantages of antenna selection, relay antenna selection allows the set of relays to serve multiple source-destination 
pairs at the same time, thereby providing large spectral efficiency/reliability gains.

For point-to-point MIMO channels, assuming that antenna selection has been done (either by brute force, or a greedy method, or some simple heuristic approach), capacity expressions have been derived in \cite{Sanayei2004,Molisch2004}, while diversity gain computations have been provided in \cite{HughesAntSel}. Most of the analytical work in this area has concentrated on the evaluation of chosen metric given that the antenna selection has been done {\it apriori}. Finding the  optimal antenna subset, however, is a challenging problem on its own. 
For sufficiently large number of transmit/receive antennas (e.g. massive MIMO applications \cite{Massivemimo}), a brute force search is too expensive, and its compounded by the the fact that it needs to be done periodically for every coherence interval. 
There are a large number of papers on reduced complexity antenna selection algorithms \cite{Sandhu2000, PaulrajAntSel2003,GershmanAntSel2004,ChenAntSel2010}, however, most of them  do not provide any theoretical guarantees on the performance of the algorithm. The same holds true  for the relay antenna selection algorithms.

In this paper, in a major departure from the previous heuristic approaches, we study the antenna selection problem more systematically by leveraging results from  the area of
approximation algorithms. Approximation algorithms for solving combinatorial optimization problems is a major field of study in computer science \cite{BookVazirani}, 
where an approximate solution to an optimization problem is derived that has a fixed bounded distance from the optimal solution. 
One of the techniques used in approximation algorithms is to check if the objective function is sub-modular, since in discrete combinatorial optimization, sub-modular objective functions play a role that is akin to convex functions in the continuous domain. Sub-modular functions have been a topic of study even since the celebrated result of \cite{Nemhauser1978}, that showed that a greedy algorithm (maximize per step reward) achieves a $(1-1/e)$ fraction of the optimal solution \cite{Nemhauser1978} if the objective function is sub-modular. A function is called sub-modular  if it satisfies a diminishing
returns property, i.e. the marginal gain from adding an element
to a set $S$ is at least as high as the marginal gain from adding the same element to a superset of $S$. A special case of a sub-modular function is a modular function (non-diminishing return) for which the value from adding an element $a$
to a set $S$ is equal to the sum of the value just using $S$ and the value with using $a$. For a modular function, 
it is well-known that a greedy algorithm achieves the optimal solution \cite{RadoGreedyOpt1968, DavidGreedyOpt1968,EdmondsGreedyOpt1971}.

In this paper, for point-to-point MIMO channels, we study the receive antenna selection problem for maximizing the mutual information or achievable rate, where the goal is to select the $L$ best antennas among the total $\Nr$ receive antennas. We assume that the number of transmit antennas $\Nt \le \Nr$. 
We  show that the objective function in the receive antenna selection problem is sub-modular, and hence the mutual information with the greedy algorithm 
is guaranteed to be within a $(1-1/e)$ fraction of the optimal mutual information value. The greedy algorithm at each step updates the transmit antenna subset by adding that antenna to the existing subset that has the highest increment to the mutual information among the available antennas. Therefore, the  
complexity of the greedy algorithm is linear in the number of antennas. Thus, the greedy antenna selection policy not only has guaranteed performance bound but is also computationally simple for practical implementation.

For the relay antenna selection problem we consider a single antenna equipped source, several relays with total $N$ antennas, and a destination with a single antenna. The problem we consider is to select $L$ best antennas out of the available $N$ antennas to maximize the mutual information between the source and the destination.  
For the relay antenna selection problem we show that the objective function of the relay antenna selection problem is modular, i.e. the objective function with using $n$ relay antennas is equal to the sum of objective function with using only $n-1$ antennas and the objective function with using only a single antenna. Hence, using the results from approximation algorithms, we show that a greedy algorithm achieves the optimal solution, however, with linear complexity compared to the exponential complexity of the brute force approach.

\section{Notation}
Let ${\bA}$ denote a matrix, ${\bf a}$ a vector and
$a_i$ the $i^{th}$ element of ${\bf a}$. The transpose and conjugate transpose are denoted by $^T$, and $^\dag$, respectively. ${\bf I}_n$ denotes the $n\times n$ identity matrix.
The expectation of function $f(x)$ with respect to $x$ is denoted by
${\bbE}(f(x))$.
Let $S_1$ be a set and $S_2$ be a subset of $S_1$. Then $S_2 \backslash S_1$ denotes the set of elements of $S_1$ that do not belong to $S_2$. The cardinality of set $S$ is denoted by $|S|$. We use the symbol
$\bydef$  to define a variable.
 
\section{Organization}The rest of the paper is organized as follows. Section \ref{sec:sys} describes the system model and problem statement for the receive antenna selection problem in point-to-point MIMO channels. In Section \ref{sec:p2p}, 
we establish the sub-modularity of the antenna receive antenna selection problem in point-to-point MIMO channels. 
The relay selection problem is discussed and analyzed in Section \ref{sec:relay}, where the modularity of relay selection problem is derived. Section \ref{sec:sims} illustrates some numerical examples, and some final conclusions are made in Section \ref{sec:conc}.

\section{System Model for the Point-to-Point MIMO channel}\label{sec:sys}
Consider a MIMO wireless channel between a transmitter  with $\Nt$ antennas and its receiver with $\Nr \ge \Nt$ antennas. Assuming that the receiver 
uses $L \le \Nr$ antennas (indexed by ${\cal R}_L$) to receive the  $L$ independent streams $\bx = [x_1,\dots, x_{L}]^T$, the received signal 
at the destination is given by
\begin{equation}\label{eq:rxsig}
\by = \sqrt{\frac{P}{\Nt}} \bH_{{\cal R}_L} \bx + \bn,
\end{equation}
where $P$ is the average transmit power, $\bH_{{\cal R}_L}\in \bbC^{\Nr\times L}$ is the channel coefficient matrix with entries $\bH_{{\cal R}_L}(i,j)$ corresponding to 
the channel coefficient between the $i^{th}$ receive antenna of the selected set ${\cal R}_L$ and the $j^{th}$ transmit antenna. The results of this paper are applicable for any continuous distribution on the entries of $\bH_{{\cal R}_L}$. We assume that the receiver exactly knows the channel state information (CSI) $\bH_{{\cal R}_L}, \forall \ {\cal R}_L \subset \{1,\dots,\Nr\} $  and uses it for 
performing antenna selection, while the transmitter has no CSI and uses all its antennas with equal power allocation. 
The case with CSI available at the transmitter while of equal interest is presently out of scope of this paper.

There are several reasons for using only $L$ antennas out of the total available $\Nr$,  such as limited number of actual transmitter chains/power amplifiers, simple circuitry, etc. See \cite{Molisch2004,Sanayei2004} for a detailed discussion. In this paper we consider the selection of $L$ best receive antennas for maximizing mutual information, which is key for maximizing achievable rate (capacity) and diversity-multiplexing tradeoff through outage capacity maximization.  


\begin{rem}\label{rem:nonsubmodtransmit} Transmit antenna selection problem where $L$ out of $\Nt$ antennas are selected at the transmitter for maximizing the mutual information is not sub-modular. It is actually not even monotonically increasing in the number of antennas, and hence a greedy algorithm cannot be guaranteed to give theoretical guarantees on its performance. To illustrate the non-monotonicity consider the following example, where there are $\Nt=2$ transmit antennas and a single receive antenna $\Nr=1$. Let the channel magnitudes between the $i^{th}$ transmit antenna and the receiver antenna be  $h_i, \ i=1,2$. Then if only one antenna is used, say the first, then the mutual information is $C_1 =\log (1+P |h_1|^2)$, while if both antennas are used $C_2 = \log (1+\frac{P}{2} (|h_1|^2+|h_2|^2))$. Now depending on values of $h_1$ and $h_2$, $C_1$ can be more or less than $C_2$. Thus, the transmit antenna selection problem is not monotonically increasing in the number of transmit antennas. This limitation arises because of equal power splitting among the transmit antennas. Thus, increasing the number of transmit antennas does not increase the mutual information for each realization of channel magnitudes, however, in expectation (ergodic capacity) increasing the number of antennas does help. 
\end{rem}

\section{Sub-Modularity of the Receive Antenna Selection in a Point-to-Point MIMO channel}\label{sec:p2p}
For the receive signal model (\ref{eq:rxsig}),  the mutual information  using antenna subset ${\cal R}_L$  \cite{Cover2004}, is given by 
\begin{equation}\label{eq:mi}
C_{{\cal R}_L} \bydef  \log \det \left(\bI + \frac{P}{\Nt}\bH_{{\cal R}_L}\bH_{{\cal R}_L}^{\dag}\right),
\end{equation}
and the antenna selection problem is to find the optimal set ${\cal R}_L$ of $L$ transmit antennas that maximizes the mutual information, i.e.
\begin{equation}\label{optprob}
\max_{{\cal R}_L \subset \{1,2,\dots,\Nr\}, |{\cal R}_L|=L} C_{{\cal R}_L}. 
\end{equation}

\begin{rem} For MIMO channels, the tradeoff between rate of transmission and reliability (diversity gain) is captured through the diversity-multiplexing  tradeoff (DMT) \cite{Zheng2003}. The DMT directly depends on the outage probability 
that is defined as \[P_{out}(R) \bydef P(C_{{\cal R}_L} \le R).\] Thus, maximizing $C_{{\cal R}_L}$ over antenna subsets is equal to obtaining the optimal  diversity-multiplexing tradeoff of MIMO channels with antenna selection.
\end{rem}

The antenna selection problem (\ref{optprob}) has no elegant solution, and with a brute force approach, one needs to make $ {\Nr \choose L}$ computations 
to solve it. Since $\Nr$ can be large, e.g. in order of tens or hundreds for massive MIMO applications \cite{Massivemimo}, the brute force search is too expensive. Moreover, 
the search needs to be carried out after each coherence interval since channel coefficients change independently across different coherence intervals. To simplify the complexity of antenna selection, many heuristic approaches have been proposed in literature, however, none of them provide any theoretical guarantees on their performance. 

In this paper, we present a systematic study of receive antenna selection algorithm and provide theoretical guarantees on its performance by leveraging ideas from the field of approximation algorithms that are quite popular  in the computer science community. Essentially, we make use of the well-known result that for a sub-modular objective function, a greedy solution approximates 
the optimal solution by a factor of $(1-1/e)$ \cite{Nemhauser1978}. 
We will show that the receive antenna selection problem is sub-modular in the number of antennas, 
and hence a greedy solution, which at each step maximizes the incremental gain in the mutual information achieves a $(1-1/e)$ fraction of the 
optimal solution. To begin with, we need the following definitions.

\begin{defn} Let $f$ be a function defined as $f : U \rightarrow \bbR^+$. Then $f$ is called {\it monotone} if $f(S \cup \{a\}) - f(S) \ge 0$, for all $a\in U, S\subseteq U, a\notin S$, and $f$ is called a {\it sub-modular} function if it satisfies 
\[f(S \cup \{a\}) - f(S) \ge f(T \cup \{a\}) - f(T), \] for all elements $a\in U, a\notin T$ and all pairs of subsets $S \subseteq T \subseteq U$. In particular, a function $f$ is called {\it modular} if it satisfies \[f(S \cup \{a\}) - f(S) = f(T \cup \{a\}) - f(T), \] for all elements $a\in U, a\notin T$ and all pairs of subsets $S \subseteq T \subseteq U$.
\end{defn}
Essentially, for a sub-modular function the incremental gain from adding an extra element in the set decreases with the size of the set. The main interest in sub-modular functions is because of following Theorem that provides guarantees on the performance of greedy methods for optimizing sub-modular objective functions.

\begin{thm} \cite{Nemhauser1978}\label{thm:approx} For a non-negative, monotone sub-modular
function $f$, let $S$ be a set of size $k$ obtained by selecting elements
one at a time, each time choosing an element that provides
the largest marginal increase in the function value. Let $S^{\star}$ be a
set that maximizes the value of $f$ over all $k$-element sets. Then
$f(S)\ge (1-\frac{1}{e})f(S^{\star})$, in other words, $S$ provides a $(1-\frac{1}{e})$
approximation.
\end{thm}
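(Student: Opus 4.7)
The plan is to control the gap $f(S^{\star}) - f(S_i)$, where $S_i$ denotes the greedy set after $i$ insertions, and show that this gap shrinks by a multiplicative factor of at least $(1-1/k)$ at each step. Iterating $k$ times and invoking the elementary estimate $(1-1/k)^k \le 1/e$ will then deliver the stated bound. Let me write $\delta_i := f(S_i) - f(S_{i-1})$ for the marginal gain of the $i$-th greedy pick, and enumerate the optimal set arbitrarily as $S^{\star} = \{o_1,\dots,o_k\}$.

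To derive the one-step decay, I would first invoke monotonicity to get $f(S^{\star}) \le f(S^{\star} \cup S_i)$, and then telescope the right-hand side along the ordering of $S^{\star}$:
\[ f(S^{\star} \cup S_i) - f(S_i) = \sum_{j=1}^{k} \bigl[ f(S_i \cup \{o_1,\dots,o_j\}) - f(S_i \cup \{o_1,\dots,o_{j-1}\}) \bigr]. \]
Submodularity, applied with the containment $S_i \subseteq S_i \cup \{o_1,\dots,o_{j-1}\}$, bounds the $j$-th summand by $f(S_i \cup \{o_j\}) - f(S_i)$. Since the greedy rule chooses, at step $i+1$, an element maximizing the marginal gain over the current set, each of these quantities is at most $\delta_{i+1}$. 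Summing over $j$ yields $f(S^{\star}) - f(S_i) \le k\,\delta_{i+1}$, which rearranges to the recursion
\[ f(S^{\star}) - f(S_{i+1}) \le \bigl(1-\tfrac{1}{k}\bigr)\bigl[f(S^{\star}) - f(S_i)\bigr]. \]

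Iterating this from $i=0$, where $S_0 = \emptyset$ and non-negativity gives $f(S^{\star}) - f(S_0) \le f(S^{\star})$, produces
\[ f(S^{\star}) - f(S_k) \le \bigl(1-\tfrac{1}{k}\bigr)^k f(S^{\star}) \le e^{-1} f(S^{\star}), \]
which is equivalent to $f(S_k) \ge (1-1/e)\,f(S^{\star})$.

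The main obstacle, and the place where I would be most careful, is the submodularity-based telescoping step: one has to pick the correct reference sets when applying the diminishing-returns inequality so that every one of the $k$ summands is simultaneously bounded by the single quantity $\delta_{i+1}$. A minor edge case arises if some $o_j$ already lies in $S_i$, but in that situation both sides of the relevant inequality vanish, so the bound persists. After that, everything reduces to algebra and the standard inequality $(1-1/k)^k \le 1/e$.
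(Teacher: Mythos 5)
Your proof is correct and complete: the telescoping of $f(S^{\star}\cup S_i)-f(S_i)$ along an enumeration of $S^{\star}$, the submodularity bound reducing each summand to a marginal gain over $S_i$, the greedy bound by $\delta_{i+1}$, and the resulting geometric decay of the gap via $(1-1/k)^k\le 1/e$ constitute the classical Nemhauser--Wolsey--Fisher argument. The paper itself gives no proof of this theorem --- it is imported by citation from \cite{Nemhauser1978} --- and your argument is essentially the one in that reference, so there is nothing to reconcile.
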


For the special case of modular functions, a stronger result is available that is as follows.
\begin{thm} \cite{RadoGreedyOpt1968, DavidGreedyOpt1968,EdmondsGreedyOpt1971}\label{thm:greedyoptmodular} For a non-negative, monotone modular
function $f$, let $S$ be a set of size $k$ obtained by selecting elements
one at a time, each time choosing an element that provides
the largest marginal increase in the function value. Let $S^{\star}$ be a
set that maximizes the value of $f$ over all $k$-element sets. Then
$f(S)=f(S^{\star})$, in other words, the subset $S$ obtained by the greedy method is optimal.
\end{thm}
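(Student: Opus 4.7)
The plan is to exploit the equality in the modularity definition to decompose $f$ into a sum of singleton contributions; once this decomposition is in hand, the optimality of the greedy rule follows almost by inspection. First I would define, for each $a \in U$, the singleton marginal $w(a) := f(\{a\}) - f(\emptyset)$, and establish by induction on $|S|$ the identity
\[
f(S) \;=\; f(\emptyset) + \sum_{a \in S} w(a).
\]
The base cases $|S|=0,1$ are immediate. For the inductive step with $|S|=n+1$, pick any $a \in S$, set $T = S \setminus \{a\}$, and apply the modularity equality with the subset pair $\emptyset \subseteq T$ and the new element $a \notin T$: this yields $f(S) - f(T) = f(\{a\}) - f(\emptyset) = w(a)$, and the inductive hypothesis on $T$ closes the argument.

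With this decomposition, both the optimum and the greedy trajectory become transparent. Maximizing $f$ over size-$k$ subsets is equivalent to choosing the $k$ elements of $U$ with the largest $w$-values (ties broken arbitrarily); let $S^\star$ be any such set. For the greedy algorithm, at iteration $i$ with current set $S_i$, the marginal gain of a candidate $a \notin S_i$ is $f(S_i \cup \{a\}) - f(S_i) = w(a)$, again by the decomposition. Hence greedy at every step selects an element attaining $\max_{a \in U \setminus S_i} w(a)$, and a short induction on $i$ shows that after $k$ steps the greedy output $S$ consists of $k$ elements of largest $w$-value, giving $f(S) = f(S^\star)$.

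The main obstacle, insofar as there is one, is the decomposition step, which is really the whole content of the theorem; everything else is bookkeeping. The only subtlety there is selecting the right subset pair when invoking modularity: the definition requires $S \subseteq T$ and $a \notin T$, and the canonical choice $S = \emptyset$, $T = S' \setminus \{a\}$ is exactly what is needed. Monotonicity and non-negativity play no role in the core argument beyond making the problem well-posed, since the cardinality constraint $|S|=k$ forces the greedy rule to keep selecting elements regardless of the sign of their marginal contribution.
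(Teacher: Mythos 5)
Your proof is correct. Note, though, that the paper does not prove this statement at all --- it is quoted as a known result with citations to the classical matroid literature (Rado, Gale, Edmonds), where the general theorem is that the greedy algorithm maximizes a weight function over the independent sets of a matroid. Your argument is the direct, self-contained specialization of that result to the uniform matroid: you use the modularity equality with the pair $\emptyset \subseteq S' \setminus \{a\}$ to derive the decomposition $f(S) = f(\emptyset) + \sum_{a \in S} w(a)$ with $w(a) = f(\{a\}) - f(\emptyset)$, after which both the optimum and the greedy trajectory reduce to ``take the $k$ largest $w$-values.'' This is exactly the right way to prove the statement from the definition given in the paper, and it buys something the bare citation does not: it makes explicit that the cited guarantee rests only on the additive decomposition, and your closing observation that monotonicity and non-negativity are not actually needed under a pure cardinality constraint is accurate (they matter for the sub-modular case of Theorem~\ref{thm:approx}, not here). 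The one step worth stating slightly more carefully is the final ``short induction on $i$'': the clean invariant is that after $i$ greedy steps the multiset of selected $w$-values equals the multiset of the $i$ largest $w$-values in $U$, which holds because the element chosen at step $i+1$ attains $\max_{a \notin S_i} w(a)$ and every element outside $S_i$ has $w$-value no larger than every element inside $S_i$. With that invariant the conclusion $f(S) = f(S^{\star})$ is immediate.
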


Thus, if we can show that the receive antenna selection problem (\ref{optprob}) is sub-modular, then we are guaranteed to get a $(1-\frac{1}{e})$ 
approximation using a greedy step-wise approach. Next, we provide two different proofs for showing that the receive antenna selection problem objective function is sub-modular. Before the proof we need the following Lemma.

\begin{lemma}\label{lem:entropysubmod} Consider an $N$ dimensional random vector $\bx$. Let $\bx_A=[\bx_i]_{i\in A}$ be the vector consisting 
of elements of $\bx$ indexed by $A\subseteq \{1,2,\dots,N\}$, and let $h({\bx_A})$ be the entropy of the random vector $\bx_A$ \cite{Cover2004}. 
Then the entropy function is sub-modular over the subsets of $\{1,2,\dots,N\}$, i.e. 
$h(\bx_{S\cup \{a\}}) - h(\bx_{S}) \ge h(\bx_{T\cup \{a\}}) - h(\bx_{T})$ when $S\subseteq T \subseteq \{1,2,\dots,N\}, a \in \{1,2,\dots,N\}, a \notin T$.
\end{lemma}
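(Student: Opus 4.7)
The plan is to reduce the submodularity inequality to the classical fact that conditioning cannot increase (differential) entropy. The key identity is the chain rule for entropy, which lets me rewrite each marginal gain as a conditional entropy, after which the desired inequality becomes immediate.

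First I would use the chain rule $h(\bx_{A\cup\{a\}}) = h(\bx_A) + h(\bx_a \mid \bx_A)$ (valid for any subset $A$ with $a\notin A$) to rewrite both sides of the claim. The left side becomes
\begin{equation*}
h(\bx_{S\cup\{a\}}) - h(\bx_S) = h(\bx_a \mid \bx_S),
\end{equation*}
and similarly the right side becomes $h(\bx_a \mid \bx_T)$. So the submodularity claim reduces to showing
\begin{equation*}
h(\bx_a \mid \bx_S) \ge h(\bx_a \mid \bx_T) \quad \text{whenever } S \subseteq T,\ a \notin T.
\end{equation*}

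Next I would write $T = S \cup (T\setminus S)$ and invoke the standard ``conditioning reduces entropy'' inequality: for any random variables $X, Y, Z$, $h(X \mid Y, Z) \le h(X \mid Y)$. Applied with $X = \bx_a$, $Y = \bx_S$, $Z = \bx_{T\setminus S}$, this gives exactly $h(\bx_a \mid \bx_T) \le h(\bx_a \mid \bx_S)$, which is what we need. The ``conditioning reduces entropy'' inequality itself follows from the non-negativity of mutual information, $I(X;Z \mid Y) = h(X \mid Y) - h(X \mid Y, Z) \ge 0$, a standard result from \cite{Cover2004}.

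There is essentially no hard step here — the proof is a two-line application of the entropy chain rule together with the conditioning-reduces-entropy fact. The only thing to note is that the argument applies uniformly to both discrete and continuous (differential) entropy, which is important since later in the paper the random vector of interest will be the received signal $\by$ with a continuous distribution. If anything, the ``obstacle'' is merely to remember to state the result for the appropriate notion of entropy and to verify that the chain rule used above is the one that applies in that setting; both are standard.
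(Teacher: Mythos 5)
Your proof is correct and follows exactly the same route as the paper's: rewrite each marginal gain via the chain rule as $h(\bx_a \mid \bx_S)$ and $h(\bx_a \mid \bx_T)$, then invoke the fact that conditioning reduces entropy. Your version is slightly more explicit (justifying the conditioning inequality via non-negativity of conditional mutual information and noting it holds for differential entropy), but there is no substantive difference.
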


\begin{proof} By the definition of entropy \cite{Cover2004}, $h(\bx_{S\cup \{a\}}) - h(\bx_{S}) = h(\bx_{\{a\}}|\bx_S)$. Since $S\subseteq T$, clearly, the conditional entropy of $\bx_{\{a\}}$ given $\bx_T$ is less than given $\bx_S$, i.e. $h(\bx_{\{a\}}|\bx_S) \ge h(\bx_{\{a\}}|\bx_T)$.
\end{proof}

\begin{thm}\label{thm:monotone} The objective function $C_{{\cal R}_L}$ is monotone over the subsets of $\{1,2,\dots,\Nr\}$.
\end{thm}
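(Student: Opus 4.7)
The plan is to prove the pointwise inequality $C_{\mathcal{R}_L \cup \{a\}} \ge C_{\mathcal{R}_L}$ for every $\mathcal{R}_L \subseteq \{1,\dots,\Nr\}$ and every $a \notin \mathcal{R}_L$, from which monotonicity of $C_{\mathcal{R}_L}$ over subsets of $\{1,\dots,\Nr\}$ follows immediately by induction on $|\mathcal{R}_L|$. My preferred route is information-theoretic, to stay aligned with the entropy-based Lemma~\ref{lem:entropysubmod} that is invoked right afterwards. Because the input $\bx$ is i.i.d.\ Gaussian with equal per-antenna power and is oblivious to which receive antennas are selected, the log-det in (\ref{eq:mi}) is exactly the mutual information $I(\bx;\by_{\mathcal{R}_L})$ between the transmitted vector and the outputs at the antennas in $\mathcal{R}_L$. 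The chain rule then gives
\[
I\bigl(\bx;\by_{\mathcal{R}_L \cup \{a\}}\bigr) \;=\; I\bigl(\bx;\by_{\mathcal{R}_L}\bigr) \;+\; I\bigl(\bx;\by_{\{a\}} \,\bigm|\, \by_{\mathcal{R}_L}\bigr),
\]
and non-negativity of the conditional mutual information closes the argument.

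An alternative, purely algebraic proof uses Sylvester's determinant identity to rewrite $C_{\mathcal{R}_L} = \log\det\bigl(\bI_{\Nt} + \tfrac{P}{\Nt}\bH_{\mathcal{R}_L}^{\dag}\bH_{\mathcal{R}_L}\bigr)$. Expanding the Gram matrix as $\bH_{\mathcal{R}_L}^{\dag}\bH_{\mathcal{R}_L} = \sum_{i\in \mathcal{R}_L} \bh_i \bh_i^{\dag}$, where $\bh_i^{\dag}\in \bbC^{1\times \Nt}$ denotes the row of $\bH$ associated with receive antenna $i$, appending antenna $a$ amounts to adding the rank-one positive semidefinite term $\bh_a\bh_a^{\dag}$. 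The desired inequality then follows from the fact that $\det(\cdot)$ is monotone on the cone of positive definite matrices, i.e.\ $\bA \succeq \bB \succ 0$ implies $\det(\bA)\ge \det(\bB)$.

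I do not expect any genuine obstacle here, because the content of the theorem is essentially the intuitive statement that ``more receive antennas cannot hurt'', which is an instance of the data processing inequality: $\by_{\mathcal{R}_L}$ is a deterministic coordinate projection of $\by_{\mathcal{R}_L\cup\{a\}}$. The only subtlety worth flagging explicitly in the write-up is that the argument relies on the transmit signalling being identical for both subsets; by contrast, on the transmit side the per-antenna power is rescaled as $P/\Nt$ when antennas are added, which is precisely the mechanism that breaks monotonicity in Remark~\ref{rem:nonsubmodtransmit} and explains why the corresponding statement fails there.
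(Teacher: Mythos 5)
Your proposal is correct and matches the paper, which simply asserts that the result is ``immediate since adding more receive antennas cannot decrease the mutual information'' and notes it ``can also be shown more directly using determinant inequalities''---you have just spelled out both of those one-line hints in full (chain rule plus non-negativity of conditional mutual information, and the rank-one Loewner-order/determinant-monotonicity argument). No gaps; your closing remark contrasting this with the transmit-side power renormalization in Remark~\ref{rem:nonsubmodtransmit} is a nice touch but not needed for the proof itself.
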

\begin{proof} The proof is immediate since adding more receive antennas cannot decrease the mutual information. It can also be shown more directly using determinant inequalities.
\end{proof}

\begin{thm}\label{thm:submod} The objective function $C_{{\cal R}_L}$ is sub-modular over the subsets of $\{1,2,\dots,\Nr\}$.
\end{thm}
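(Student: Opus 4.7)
The plan is to reduce the claim to Lemma~\ref{lem:entropysubmod} via an information-theoretic reinterpretation of $C_{\mathcal{R}_L}$. Recall that if $\bx\sim\mathcal{CN}(\mathbf{0},\bI_{\Nt})$ is the (standard) Gaussian input and $\by=\sqrt{P/\Nt}\,\bH\bx+\bn$ with $\bn\sim\mathcal{CN}(\mathbf{0},\bI_{\Nr})$ independent of $\bx$, then for any subset $S\subseteq\{1,\dots,\Nr\}$ the random subvector $\by_S$ is circularly symmetric complex Gaussian with covariance $\bI_{|S|}+\frac{P}{\Nt}\bH_S\bH_S^{\dag}$. Hence
\[
C_S \;=\; \log\det\!\Bigl(\bI+\tfrac{P}{\Nt}\bH_S\bH_S^{\dag}\Bigr) \;=\; h(\by_S)-h(\bn_S),
\]
up to an additive constant that cancels in every marginal difference.

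First I would apply Lemma~\ref{lem:entropysubmod} to the random vector $\by$: for every $a\notin T$ and every $S\subseteq T$,
\[
h(\by_{S\cup\{a\}})-h(\by_S)\;\ge\;h(\by_{T\cup\{a\}})-h(\by_T).
\]
Second, I would observe that the noise term $h(\bn_S)$ is \emph{modular}: since the components of $\bn$ are independent, $h(\bn_{S\cup\{a\}})-h(\bn_S)=h(n_a)$ is a constant independent of $S$. Subtracting a modular function from a submodular function preserves submodularity, so
\[
C_{S\cup\{a\}}-C_S \;=\; \bigl[h(\by_{S\cup\{a\}})-h(\by_S)\bigr]-h(n_a) \;\ge\; C_{T\cup\{a\}}-C_T,
\]
which is exactly the submodularity of $C_{\mathcal{R}_L}$ claimed in the theorem.

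The main obstacle, in my view, is not the logic above (which is essentially one line once the reduction is made) but the author's promise of a \emph{second}, direct matrix-theoretic proof. For that I would use a Schur-complement expansion on the block matrix obtained by appending the row $\bh_a^{\dag}$ to $\bH_S$, together with the push-through identity $(\bI+\tfrac{P}{\Nt}\bH_S\bH_S^{\dag})^{-1}\bH_S=\bH_S(\bI+\tfrac{P}{\Nt}\bH_S^{\dag}\bH_S)^{-1}$, to obtain the clean form
\[
C_{S\cup\{a\}}-C_S \;=\; \log\!\Bigl(1+\tfrac{P}{\Nt}\,\bh_a^{\dag}\bigl(\bI_{\Nt}+\tfrac{P}{\Nt}\bH_S^{\dag}\bH_S\bigr)^{-1}\bh_a\Bigr).
\]
Then, since $\bH_T^{\dag}\bH_T-\bH_S^{\dag}\bH_S=\sum_{i\in T\setminus S}\bh_i\bh_i^{\dag}\succeq\mathbf{0}$ for $S\subseteq T$, the operator monotonicity (decreasing) of matrix inversion on the positive definite cone yields $(\bI+\tfrac{P}{\Nt}\bH_S^{\dag}\bH_S)^{-1}\succeq(\bI+\tfrac{P}{\Nt}\bH_T^{\dag}\bH_T)^{-1}$. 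Sandwiching with $\bh_a^{\dag}(\cdot)\bh_a$ and taking $\log(1+\cdot)$ gives the desired inequality. The only technical care needed is invoking the correct version of the push-through / Woodbury identity, which is routine.
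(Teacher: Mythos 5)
Your proposal is correct, and it is worth comparing to the paper's two proofs in a little detail. Your first argument is in essence the paper's first proof --- identify $C_S$ with the differential entropy of a Gaussian vector and invoke Lemma~\ref{lem:entropysubmod} --- but you execute it more carefully in two respects: you exhibit the single parent random vector (the channel output $\by$) whose subvectors $\by_S$ have covariance $\bI+\frac{P}{\Nt}\bH_S\bH_S^{\dag}$, which is what Lemma~\ref{lem:entropysubmod} actually requires (the paper defines a separate Gaussian $\bx_A$ for each subset and leaves the existence of the common parent implicit), and you account for the $|S|$-dependent additive constant by writing $C_S=h(\by_S)-h(\bn_S)$ and noting that $h(\bn_S)$ is modular, so subtracting it preserves submodularity. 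Your second argument is genuinely different from the paper's second proof: the paper pushes the determinants to $\Nt\times\Nt$ size via $\det(\bI+\bA\bB)=\det(\bI+\bB\bA)$ and then argues that one marginal gain is the mutual information of a physically degraded version of the channel realizing the other, whereas you compute the marginal gain in closed form, $C_{S\cup\{a\}}-C_S=\log\bigl(1+\frac{P}{\Nt}\,\bh_a(\bI+\frac{P}{\Nt}\bH_S^{\dag}\bH_S)^{-1}\bh_a^{\dag}\bigr)$, via the matrix determinant lemma, and then use $\bH_S^{\dag}\bH_S\preceq\bH_T^{\dag}\bH_T$ together with the antitonicity of matrix inversion on the positive definite cone. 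Your route is more elementary and fully self-contained (no appeal to degradedness of broadcast/MAC channels), and it yields the useful explicit formula for the greedy increment; the paper's route buys an interpretation of submodularity as a statement about interference from additional users in a multiple access channel. The only nit is notational: with the paper's convention that $\bh\in\bbC^{1\times\Nt}$ is a row, the rank-one updates should read $\bh_i^{\dag}\bh_i$ rather than $\bh_i\bh_i^{\dag}$, but this does not affect the argument.
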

\begin{proof} In this proof we will show that $C_{{\cal R}_L}$ is equal to the entropy of some random variable, and conclude the proof 
using the fact that entropy function is sub-modular using Lemma \ref{lem:entropysubmod}. 
Recall from (\ref{eq:mi}) that 
\[C_{{\cal R}_L} =   \log \det \left(\bI + \frac{P}{\Nt}\bH_{{\cal R}_L}\bH_{{\cal R}_L}^{\dag}\right).\] 
Let $\Sigma \bydef  \left(\bI + \frac{P}{\Nt}\bH_{{\cal R}_L}\bH_{{\cal R}_L}^{\dag}\right)$. 
Consider a set $V = \{1,2,\dots, \Nr\}$, and let $A \subseteq V$ of cardinality $|A| = L$. Let $\bx_A$ be a zero-mean, multi-variate 
Gaussian random vector with covariance matrix $\Sigma$, i.e. the probability density function of $\bx_A$ is 
$g_{\bx_A}(x) = \frac{1}{\sqrt{2\pi \det(\Sigma)}} \exp^{-\frac{1}{2}x^T\Sigma^{-1}x}$. This construction is valid since $\Sigma$ is a symmetric matrix. Also note that $\Sigma$ is a positive definite matrix with $\det(\Sigma) > 1$, and hence $\log(\det(\Sigma)) >0$. Then, from \cite{Cover2004},
 the entropy of $\bx_A$ is $h(\bx_A) = \log \left(\sqrt{ (2\pi e)^L \det(\Sigma)}\right)$. 
 Hence 
 \begin{eqnarray*}
h(\bx_A) &=& \log \left(\sqrt{2\pi e^L}\right) + \frac{1}{2} \log \left(\det(\Sigma)\right), \\ 
&=& \log \left(\sqrt{2\pi e^L}\right) + C_{{\cal R}_L}. 
\end{eqnarray*}
Since entropy is a sub-modular function (Lemma \ref{lem:entropysubmod}), it follows that  $C_{{\cal R}_L}$ is sub-modular.

\end{proof}

We present another proof of Theorem \ref{thm:submod} to illustrate the connections between sub-modularity of $C_{{\cal R}_L}$ and mutual information of some information theoretic channels.

\begin{proof}
In this proof we will directly show that for $f({\cal R}_L) = C_{{\cal R}_L}, \ {\cal R}_L\subseteq U = \{1,2,\dots,\Nr\}$, where $U$ denotes the set of receive antennas, $f(S \cup \{a\}) - f(S) \ge f(T \cup \{a\}) - f(T)$, for all elements $a\in U, a \notin T,$ and all pairs of subsets $S \subseteq T \subseteq U$. 

For $U = \{1,2,\dots,\Nr\}$, and $a\in U, a \notin T, S \subseteq T \subseteq U$,
let the channel coefficient matrix between the $\Nt$ transmit antennas and the set $S$ of receiver antennas  be $\bH_S\in \bbC^{|S|\times \Nt}$, between the  $\Nt$ transmit antennas and the set $T$ of receiver antennas  be $\bH_T\in \bbC^{ |T| \times \Nt} = [\bH_{T\backslash S} \ \bH_S]^T$, and the channel coefficient vector between the $\Nt$ transmit antennas and the $a^{th}$ receive antenna be $\bh\in \bbC^{1\times \Nt}$.
Then for $f({\cal R}_L) = C_{{\cal R}_L}$,
\begin{eqnarray}\nn
f(S \cup \{a\}) - f(S) &=&  \log \det \left(\bI_{|S|+1} + \frac{P}{\Nt}\left[\begin{array}{c}\bH_S \\  \bh\end{array}\right][\bH_S^{\dag} \ \ \bh^{\dag}]\right) -  \log \det \left(\bI_{|S|} + \frac{P}{\Nt}\bH_S \ \bH_S^{\dag}\right),\\ \label{eq:mi1} 
 &=& \log \det \left(\bI_{\Nt} + \frac{P}{\Nt}[\bH_S^{\dag} \ \ \bh^{\dag}]\left[\begin{array}{c}\bH_S \\  \bh\end{array}\right]\right) -  \log \det \left(\bI_{\Nt} + \frac{P}{\Nt}\bH_S^{\dag} \ \bH_S\right), 
\end{eqnarray}
where the second statement follows from the determinant equality $\det(I + AB) = \det(I+BA)$. Similarly 
\begin{eqnarray}\nn
f(T \cup \{a\}) - f(T) &=&  \log \det \left(\bI_{\Nt} + \frac{P}{\Nt} [\bH_{T\backslash S}^{\dag} \ \bH_S^{\dag} \ \bh^{\dag}]\left[\begin{array}{c}\bH_{T\backslash S} \\  \bH_S \\ \bh\end{array}\right]\right)\\\label{eq:mi2}
&& -  \log \det \left(\bI_{\Nt} + \frac{P}{\Nt}[\bH_{T\backslash S}^{\dag} \ \bH_S^{\dag}] \left[\begin{array}{c}\bH_{T\backslash S} \\  \bH_S \end{array}\right]\right).
\end{eqnarray}

Consider the two MIMO multiple access channels (MAC) shown in Figs. \ref{fig1} and \ref{fig2}. In Fig. \ref{fig1}, a two-user MAC is shown, where $A$ is a user with a single antenna, $B$ is a user with $|S|$ antennas, and the receiver $C$ has $\Nt$ antennas. Similarly, in Fig. \ref{fig2}, a three-user MAC is shown, where  
$X$ is a user with a single antenna, $Y$ is a user with $|S|$ antennas, $Z$ is a user with $| T \backslash S|$ antennas, and the receiver $W$ has 
$\Nt$ antennas.
Assuming all input distributions are Gaussian in 
Figs. \ref{fig1} and \ref{fig2}, and each antenna of each user is transmitting power $\frac{P}{\Nt}$, it is easy to see that (\ref{eq:mi1}) corresponds to the mutual information between $A$ and $C$ in Fig. \ref{fig1}, while (\ref{eq:mi2}) corresponds to to the mutual information between 
$X$ and $W$ in Fig. \ref{fig2}. Since, the channel in Fig. \ref{fig2} between $X$ and $W$ is physically degraded \cite{Cover2004} compared to the channel between $A$ and $C$ in Fig. \ref{fig1}, the mutual information of the channel in Fig. \ref{fig1} is at least as much as the mutual information of the channel in Fig. \ref{fig2}, and hence we can conclude that 
expression (\ref{eq:mi1}) is greater than or equal to (\ref{eq:mi2}), consequently proving the sub-modularity of the objective function in the 
antenna selection problem.

\end{proof}

\begin{figure}
\centering
\includegraphics[width=2in]{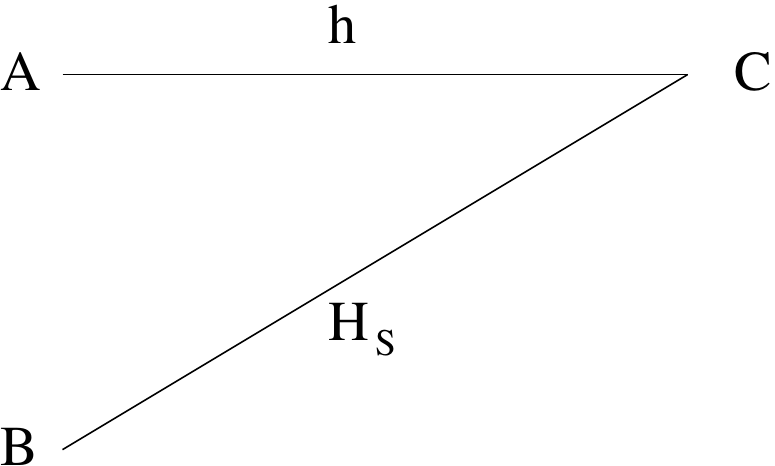}
\caption{MIMO multiple access channel  with two inputs and one output.}
\label{fig1}
\end{figure}

\begin{figure}
\centering
\includegraphics[width=2in]{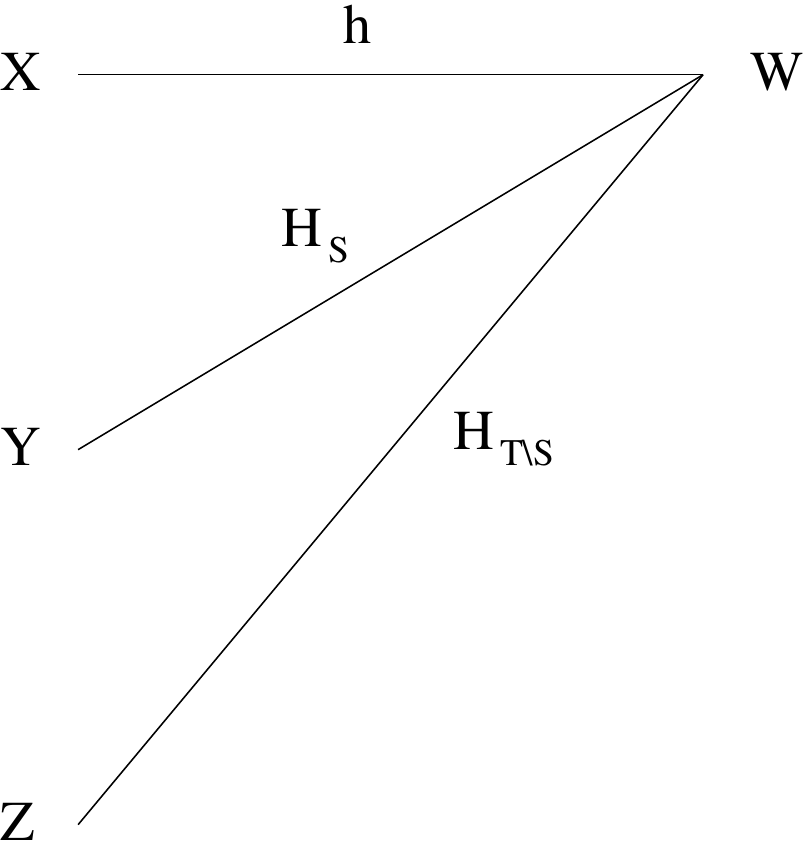}
\caption{MIMO multiple access channel  with three inputs and one output.}
\label{fig2}
\end{figure}

Thus, in light of Theorem \ref{thm:approx}, and the sub-modularity of the objective function in the receive antenna selection problem (Theorem \ref{thm:submod}), we propose the following 
greedy algorithm to maximize the mutual information while selecting the $L$ out of the $\Nr$ receive antennas.

Greedy Algorithm (GA):
At step $i$, ${\cal R}_L = {\cal R}_L \cup \{i^{\star}\}$, where  \[i^{\star} = \arg \max_{i \in \{1,2,\dots, \Nr\}, i\notin {\cal R}_L}  \log \det \left(\bI + \frac{P}{\Nt}\bH_{{\cal R}_L\cup \{i\}}\bH_{{\cal R}_L\cup \{i\}}^{\dag}\right),\] and  
repeat for $i=i+1$. Stop when $|{\cal R}_L|=L$.

Formally, we write our main result as follows. 
\begin{thm} \label{thm:} Let the output of the greedy algorithm (GA) be the set $S$. Then, if $S^{\star}$ is the
set that maximizes the value of $f({\cal R}_L) =C_{{\cal R}_L}$ over all $L$-element sets. Then
$f(S)\ge (1-\frac{1}{e})f(S^{\star})$.
\end{thm}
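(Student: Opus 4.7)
My plan is to recognize this claim as a direct corollary of the Nemhauser-type guarantee in Theorem \ref{thm:approx}, so the proof reduces to verifying the three hypotheses of that theorem for $f({\cal R}_L) \bydef C_{{\cal R}_L}$ on the ground set $U = \{1,2,\dots,\Nr\}$, and then matching the greedy procedure there with our algorithm GA.

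For the hypotheses, I would first check non-negativity: since $\bH_{{\cal R}_L}\bH_{{\cal R}_L}^{\dag}$ is positive semi-definite, every eigenvalue of $\bI + \frac{P}{\Nt}\bH_{{\cal R}_L}\bH_{{\cal R}_L}^{\dag}$ is at least $1$, so its determinant is at least $1$ and hence $f({\cal R}_L) = \log\det(\cdot) \ge 0$. Monotonicity over subsets of $\{1,\dots,\Nr\}$ is exactly Theorem \ref{thm:monotone}, and sub-modularity is exactly Theorem \ref{thm:submod}, both of which have already been established. I would then observe that GA adds, at each step, the antenna index $i^{\star} \notin {\cal R}_L$ maximizing the marginal value $f({\cal R}_L \cup \{i\}) - f({\cal R}_L)$ and terminates at $|{\cal R}_L| = L$; this coincides exactly with the greedy procedure of Theorem \ref{thm:approx} with $k = L$.

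With the three properties in hand and the algorithms matched, applying Theorem \ref{thm:approx} directly yields $f(S) \ge \bigl(1-\tfrac{1}{e}\bigr)\, f(S^{\star})$, which is the desired bound. There is no substantive obstacle at this step: the real work lies upstream, in Theorem \ref{thm:submod} (established either via the multivariate-Gaussian entropy identity through Lemma \ref{lem:entropysubmod}, or via the degraded-MAC comparison between Figs.~\ref{fig1} and \ref{fig2}) and in the Nemhauser guarantee of \cite{Nemhauser1978} itself; the final statement is a packaging corollary that stitches these pieces together.
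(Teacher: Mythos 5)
Your proposal is correct and follows essentially the same route as the paper: verify non-negativity of $f({\cal R}_L)=C_{{\cal R}_L}$ (the paper notes $\det(\bI + \frac{P}{\Nt}\bH_{{\cal R}_L}\bH_{{\cal R}_L}^{\dag})>1$, you give the equivalent eigenvalue argument), invoke Theorem \ref{thm:monotone} for monotonicity and Theorem \ref{thm:submod} for sub-modularity, and conclude via Theorem \ref{thm:approx}. Your explicit check that GA matches the greedy procedure of Theorem \ref{thm:approx} is a small but welcome addition of care that the paper leaves implicit.
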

\begin{proof} Since $\det \left(\bI + \frac{P}{\Nt}\bH_{{\cal R}_L}\bH_{{\cal R}_L}^{\dag}\right) >1$,  $C_{{\cal R}_L} > 0$ and it follows that  $f = C_{{\cal R}_L}$ is a function from  $\{1,2,\dots,\Nr\}$ to $\bbR^+$. Thus, the result  
follows from Theorem \ref{thm:approx}, using the monotonicity of the receive antenna selection problem from Theorem \ref{thm:monotone}, and sub-modularity of the receive antenna selection problem from Theorem \ref{thm:submod}.
\end{proof}

{\it Discussion:} In this section, we showed that in a point-to-point MIMO channel, the receive antenna selection problem is sub-modular, and a greedy optimization approach is guaranteed to be within $(1-1/e)$ fraction of the optimal solution. While the transmit/receive antenna selection problem has received tremendous amount of attention and  is well studied in literature, however, to the best of our knowledge, no such theoretical guarantees have been proven before this work. 
Prior work on finding the optimal antenna subset primarily uses heuristic low-complexity approaches and does not provide with theoretical bounds. By making use of results available in the approximation algorithms literature, we have been able to justify the use of greedy approaches for receive antenna selection and derive a lower bound on their performance.

\section{Relay Antenna Selection}\label{sec:relay}
In this section, we consider a multiple relay network, where multiple relays with total $N$ antennas (co-located or distributed does not matter) help the communication between the source and the destination, equipped with single antenna each.
We show that the relay selection problem in a multiple relay network is modular when each relay has full CSIT for both of its channels and uses an amplify and forward strategy. Using the modularity of the relay selection problem, then we conclude that a greedy optimization approach leads to the optimal solution for the relay selection problem.

The relay selection problem is invariant to the presence of direct path or any duplexity assumption. So for simplicity we assume that there is no direct path and relays work in half-duplex mode with a sum power constraint. Hence the transmission takes place in two phases, where in first phase the source transmits to all relays, and then in the next phase  the selected relays transmit to the destination. Without loss of generality, for simplicity we assume that the source transmits with unit average power, and there is an unit average sum power constraint on the relays. Let the channel between the source and the $i^{th}$ relay be $f_i$ and the channel between the $j^{th}$ relay and 
destination be $g_j$. We assume that the source has no CSI, while the $i^{th}$ relay has CSI for both its channel coefficients $f_i$ and $g_i$, and the destination is assumed to know all the channel coefficients $f_i, g_i, \forall \ i$.

Let $y_k$ be the signal received at relay $k$, where 
\begin{equation*}
y_k = f_k x + n_k,
\end{equation*}
where $x$ is the signal transmitted by the source with $\bbE(x^2) =1$, and $n_k$ is the AWGN with zero mean and unit variance. The relay $k$ then 
transmits $t_k = \frac{w_ky_k}{\gamma_k}$, where $\gamma_k = \sqrt{|f_k|^2+1}$ is the normalization factor to ensure that $\bbE\left(\left(\frac{y_k}{\gamma}\right)^2\right)=1$, and $\sum_{k=1}^N w_kw_k^{\dag} = 1$ to ensure a unit  sum power constraint.

Under these assumptions, if ${\cal T}_L\subseteq \{1,2,\dots,N\}$ is the subset of relays chosen for transmission, the received signal at the destination is 
\[\by = \sum_{i\in {\cal T}_L}  \frac{g_i w_i f_i}{\gamma_i} x +  \sum_{i\in {\cal T}_L}  \frac{g_iw_i}{\gamma_i} n_i + v,\] where $v$ is the AWGN with zero mean and unit variance received at the destination. Let ${\cal T}_L = \{t_1, \dots, t_L\}$.
Hence the mutual information with the relay selected set ${\cal T}_L$ is 
\begin{equation}
C_{{\cal T}_L}^{relay}(\bw) = \log \left(1+ \SNR_{{\cal T}_L}\right),
\end{equation}
where
\[\SNR_{{\cal T}_L}(\bw) \bydef \frac{\bw^{\dag} \Delta \Delta^{\dag}\bw}{\bw^{\dag} ( \Sigma\Sigma^{\dag}+\bI)\bw},\] 
with
$\bw = [w_{t_1}, \dots, w_{t_L}]^T$, $\Delta = \left[ \frac{g_{t_1}f_{t_1}}{\gamma_{t_1}}, \dots,  \frac{g_{t_L} f_{t_L}}{\gamma_{t_L}}\right]^T$, and $\Sigma = \left[\begin{array}{ccc}  \frac{g_{t_1}}{\gamma_{t_1}} & 0 & 0 \\ 
0& \ddots &0
\\
 0&  0&\frac{g_{t_L}}{\gamma_{t_L}} \end{array}\right]$, since $\bw^{\dag}\bw =1$.

From the relay selection point of view, the optimization problem is 
\begin{equation*}
\max_{ {\cal T}_L \subseteq \{1,2,\dots,N\}} \max _{\bw} \ C_{{\cal T}_L}^{relay}(\bw),
\end{equation*}
or equivalently 
\begin{equation}
\label{optprobrelay}
\max_{{\cal T}_L \subseteq \{1,2,\dots,N\}} \max_{\bw}\SNR_{{\cal T}_L}(\bw). 
\end{equation}

Next, we show that the relay selection problem (\ref{optprobrelay}) is modular and thereafter from Theorem \ref{thm:greedyoptmodular}, we conclude that the greedy solution to the relay antenna selection problem is optimal.

\begin{thm} The objective function of the optimal relay selection problem (\ref{optprobrelay}) is modular.\label{thm:relaymod} 
\end{thm}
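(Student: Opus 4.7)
The plan is to eliminate the beamforming weights $\bw$ by solving the inner maximization in closed form, after which the resulting function of the relay subset ${\cal T}_L$ will be visibly an additive sum over the selected relays, making modularity immediate.

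For the inner problem with ${\cal T}_L$ fixed, $\max_\bw \SNR_{{\cal T}_L}(\bw)$ is a generalized Rayleigh quotient whose numerator matrix $\Delta\Delta^{\dag}$ has rank one and whose denominator matrix $\Sigma\Sigma^{\dag}+\bI$ is positive definite; the standard result is
\begin{equation*}
\max_\bw \SNR_{{\cal T}_L}(\bw) = \Delta^{\dag}(\Sigma\Sigma^{\dag}+\bI)^{-1}\Delta,
\end{equation*}
attained by matched filtering $\bw \propto (\Sigma\Sigma^{\dag}+\bI)^{-1}\Delta$. The decisive structural observation is that $\Sigma$ is diagonal with entries $g_{t_i}/\gamma_{t_i}$, so $(\Sigma\Sigma^{\dag}+\bI)^{-1}$ is also diagonal, with $i$-th entry $\gamma_{t_i}^2/(\gamma_{t_i}^2+|g_{t_i}|^2)$. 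Substituting the expression for $\Delta$ and using $\gamma_i^2 = 1+|f_i|^2$, this simplifies to
\begin{equation*}
\max_\bw \SNR_{{\cal T}_L}(\bw) = \sum_{i\in {\cal T}_L} \frac{|f_i|^2|g_i|^2}{1+|f_i|^2+|g_i|^2} \bydef \sum_{i\in {\cal T}_L} c_i,
\end{equation*}
a sum of per-relay terms that depend only on the channel coefficients of the individual relay.

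Modularity then follows by inspection: for any $S\subseteq T\subseteq \{1,\dots,N\}$ and any $a\notin T$, the marginal contribution is $f(S\cup\{a\}) - f(S) = c_a = f(T\cup\{a\}) - f(T)$. Combined with Theorem~\ref{thm:greedyoptmodular}, this gives that the greedy rule, at each step augmenting ${\cal T}_L$ with the relay of largest $c_i$, is globally optimal, and in fact reduces to simply picking the top-$L$ relays ranked by the score $|f_i|^2|g_i|^2/(1+|f_i|^2+|g_i|^2)$. The only nontrivial step is the beamforming optimization itself: once one recognizes that the diagonal structure of $\Sigma$ causes the contributions from different relays to decouple after matched filtering, the rest is routine. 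If this decoupling were broken, for instance by a per-relay power constraint coupling different $w_i$'s non-separably or by mutual interference between relay-to-destination links, one would generally expect only sub-modularity rather than modularity.
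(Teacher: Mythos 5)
Your proof is correct and follows essentially the same route as the paper: both solve the inner beamforming problem as a generalized Rayleigh quotient with rank-one numerator (the paper via the whitening substitution $\by=(\bB^{\dag})^{1/2}\bw$ and $\lambda_{\max}(\bC)=\tr(\bC)$, you via the direct formula $\Delta^{\dag}(\Sigma\Sigma^{\dag}+\bI)^{-1}\Delta$), and both exploit the diagonal structure of $\Sigma$ to arrive at the identical per-relay additive decomposition $\sum_{i\in{\cal T}_L}|f_i|^2|g_i|^2/(1+|f_i|^2+|g_i|^2)$, from which modularity is immediate.
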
  
\begin{proof} Consider $\SNR_{{\cal T}_L}(\bw) \bydef \frac{\bw^{\dag} \Delta \Delta^{\dag}\bw}{\bw^{\dag} ( \Sigma\Sigma^{\dag}+\bI)\bw}$. Let $\bA = \Delta \Delta^{\dag}$ and $\bB = \Sigma\Sigma^{\dag}+\bI$. Then 
\begin{eqnarray*}
\SNR_{{\cal T}_L}(\bw) &=& \frac{\bw^{\dag}\bA\bw}{\bw^{\dag}\bB\bw},\\
&=& \frac{\bw^{\dag}\bB^{1/2}(\bB^{\dag})^{-1/2}\bA \bB^{-1/2}(\bB^{\dag})^{1/2}\bw}{\bw^{\dag}\bB^{1/2}(\bB^{\dag})^{1/2}\bw}, \ \text{since $\bB$ is positive definite and symmetric},\\
&=& \frac{\by^{\dag}(\bB^{\dag})^{-1/2}\bA \bB^{-1/2}\by}{\by^{\dag}\by},  \ \ \ \ \ \by \bydef (\bB^{\dag})^{1/2}\bw, \\
 &=& \frac{\by^{\dag}\bC\by}{\by^{\dag}\by},  \ \ \ \ \ \bC \bydef  (\bB^{\dag})^{-1/2}\bA \bB^{-1/2}. \\
\end{eqnarray*}
Thus, $\max_{\bw} \SNR_{{\cal T}_L}(\bw) = \max_{\by}\frac{\by^{\dag}\bC\by}{\by^{\dag}\by} = \lambda_{max}(\bC)$ from the 
Rayleigh-Ritz theorem \cite{Horn1985}, and the optimal $\by$ is the eigen-vector of $\bC$ 
corresponding to the largest eigen-value of $\bC$. Moreover, since $\bA$ is a rank-$1$ matrix, $\bC$ is also rank-$1$, with  $\lambda_{max}(\bC) = \tr(\bC)$, and the optimal $\by =\bB^{-1/2} \Delta$, and consequently the optimal $\bw =  \bB^{-1} \Delta$. Since $\tr(\bC) = \sum_{i \in {\cal T}_L} \frac{|g_i|^2|f_i|^2}{|f_i|^2+|g_i|^2+1} $, we have that  
\[\max_{\bw} \SNR_{{\cal T}_L}(\bw) =  \sum_{i \in {\cal T}_L} \frac{|g_i|^2|f_i|^2}{|f_i|^2 + |g_i|^2+1}.  \] Thus, clearly the objective function is modular in the number of relay antennas.
\end{proof}

The greedy algorithm to maximize the capacity while selecting $L$ relay antennas out of $N$ is as follows.
Greedy Algorithm for Relay Selection (GARS): Initialize $n=1$ and ${\cal T}_L =\phi$.
At step $n$, ${\cal T}_L = {\cal T}_L \cup \{i^{\star}\}$, where  \[i^{\star} = \arg \max_{i \in \{1,2,\dots, N\}, i\notin {\cal T}_L}  \SNR_{{\cal T}_L \cup \{i\}},\] and repeat for $n=n+1$. Stop when $|{\cal T}_L|=L$.

The main result of this section is as follows. 
\begin{thm} \label{thm:greedyoptrelay} Let the output of the greedy algorithm (GARS) be the set $S$. 
Then, if $S^{\star}$ is the
set that maximizes the value of $f=C_{{\cal T}_L}$ over all $L$-element sets. Then
$f(S) = f(S^{\star})$.
\end{thm}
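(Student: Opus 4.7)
The plan is to reduce Theorem~\ref{thm:greedyoptrelay} to Theorem~\ref{thm:greedyoptmodular} via the explicit formula for $\max_{\bw}\SNR_{{\cal T}_L}(\bw)$ derived inside the proof of Theorem~\ref{thm:relaymod}. Since the relay capacity after inner optimization over $\bw$ is $C_{{\cal T}_L}^{relay} = \log\bigl(1 + \max_{\bw}\SNR_{{\cal T}_L}(\bw)\bigr)$, and $\log(1+\cdot)$ is strictly increasing on $[0,\infty)$, a subset ${\cal T}_L$ maximizes $C_{{\cal T}_L}^{relay}$ if and only if it maximizes $\max_{\bw}\SNR_{{\cal T}_L}(\bw)$; likewise, at each step of GARS, picking the $i$ that maximizes $\SNR_{{\cal T}_L \cup \{i\}}$ is equivalent to picking the $i$ that gives the largest marginal increase in $C_{{\cal T}_L \cup \{i\}}^{relay}$, because the base set ${\cal T}_L$ is common to all candidates and $\log(1+\cdot)$ is monotone.

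Next, define $\varphi({\cal T}_L) := \max_{\bw}\SNR_{{\cal T}_L}(\bw)$. By Theorem~\ref{thm:relaymod}, $\varphi({\cal T}_L) = \sum_{i \in {\cal T}_L} \frac{|g_i|^2|f_i|^2}{|f_i|^2 + |g_i|^2 + 1}$, a sum of non-negative per-element weights. Thus $\varphi$ is non-negative, monotone non-decreasing under subset inclusion (each added term is non-negative), and modular, since $\varphi(S \cup \{a\}) - \varphi(S) = \frac{|g_a|^2|f_a|^2}{|f_a|^2+|g_a|^2+1}$ does not depend on $S$. Hence all hypotheses of Theorem~\ref{thm:greedyoptmodular} are satisfied by $\varphi$.

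Finally, I would apply Theorem~\ref{thm:greedyoptmodular} to $\varphi$: the greedy procedure that repeatedly picks the element with the largest marginal increase in $\varphi$ (which is precisely GARS, by the equivalence noted in the first paragraph) outputs an $L$-element set $S$ that maximizes $\varphi$ over all $L$-element subsets. Since $C_{{\cal T}_L}^{relay}$ and $\varphi$ have identical argmax sets, $S$ also maximizes $C_{{\cal T}_L}^{relay}$, giving $f(S) = f(S^\star)$.

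I do not anticipate any serious obstacle here: the analytic heavy lifting (the maximization over $\bw$ yielding a modular sum) was already carried out in Theorem~\ref{thm:relaymod}. The only sliver of care needed is the bookkeeping around the monotone transformation $\log(1+\cdot)$; one should state explicitly that the greedy choice on $C_{{\cal T}_L}^{relay}$ coincides with the greedy choice on the modular $\varphi$, and that the argmax sets coincide, so that Theorem~\ref{thm:greedyoptmodular} can be invoked cleanly.
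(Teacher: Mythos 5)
Your proposal is correct and follows essentially the same route as the paper: invoke the modularity established in Theorem~\ref{thm:relaymod} and then apply Theorem~\ref{thm:greedyoptmodular}. The only difference is that you spell out the bookkeeping the paper leaves implicit (non-negativity and monotonicity of $\varphi$, and the fact that the monotone map $\log(1+\cdot)$ makes the greedy choices and argmax sets for $C_{{\cal T}_L}^{relay}$ and $\SNR_{{\cal T}_L}$ coincide), which is a welcome tightening rather than a new idea.
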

\begin{proof} Since the relay selection problem is modular (Theorem \ref{thm:relaymod}), the result follows from Theorem  \ref{thm:greedyoptmodular}.
\end{proof}

{\it Discussion:} In this section, we showed that in a multiple relay network with a single antenna equipped source-destination pair, the relay antenna selection problem for maximizing the mutual information is modular. Thus, a greedy optimization approach  achieves the optimal solution. The modularity of the objective function in this case follows by using matrix theory results to show that each relay antenna contributes an additive term to the objective function. Modular functions are special functions that do not exhibit the diminishing returns property, and where the incremental gain of adding a new element to an existing set is identical no matter how large the existing set is. 
Relay antenna selection has been extensively studied in the literature \cite{Caleb2007,Bletsas2006,Ibrahim2008,JingAntSel2009,Zhao2007}, with various objective functions, but to the best of our knowledge this is the first work that derives theoretical guarantees on relay antenna selection algorithms, let alone the optimality of the greedy approach.

\section{Simulations}\label{sec:sims} In this section we present some numerical results to illustrate the results derived in this paper. 
In Fig. \ref{sim:figp2p}, we first consider the point-to-point MIMO case and plot the achievable rate (mutual information) versus the number of chosen receive antennas for both the greedy as well the optimal strategy (brute-force) for $\Nt=4, \Nr=16$ with unit power transmission. As can be seen from Fig. \ref{sim:figp2p}, the performance of the greedy algorithm is almost similar to the optimal strategy, and far better than the theoretically derived result (being $(1-1/e)$ fraction of the optimal). The point to note here is that the theoretical bound is somewhat pessimistic and corresponds to the worst-case scenario, however, in most cases the performance of greedy algorithms is significantly better than the promised 
worst-case bound.
Similar simulations results have been obtained in \cite{GershmanAntSel2004} to show that greedy algorithms almost achieve the optimal performance in antenna selection for point-to-point MIMO setting. In Fig. \ref{fig:simrelay}, we consider the relay network and consider $N=16$ relay antennas to select from, and plot the achievable rate (mutual information) versus the number of chosen relay antennas for both the greedy as well the optimal strategy (brute-force). As established in this paper, the greedy algorithm achieves the optimal performance.

\begin{figure}
\centering
\includegraphics[width=4in]{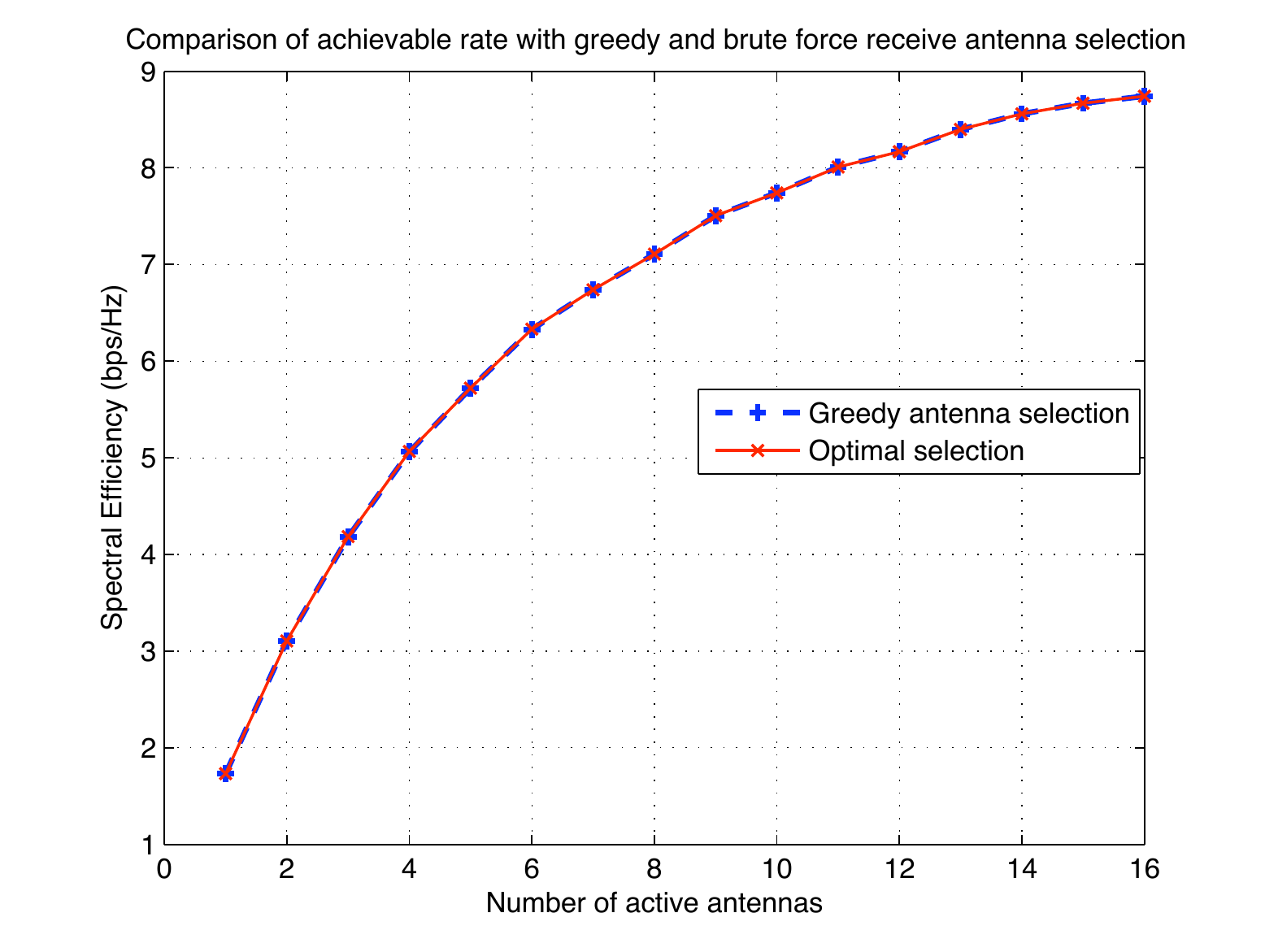}
\caption{Comparison of greedy v/s optimal strategy for receive antenna selection in a point-to-point MIMO channel.}
\label{sim:figp2p}
\end{figure}

\begin{figure}
\centering
\includegraphics[width=4in]{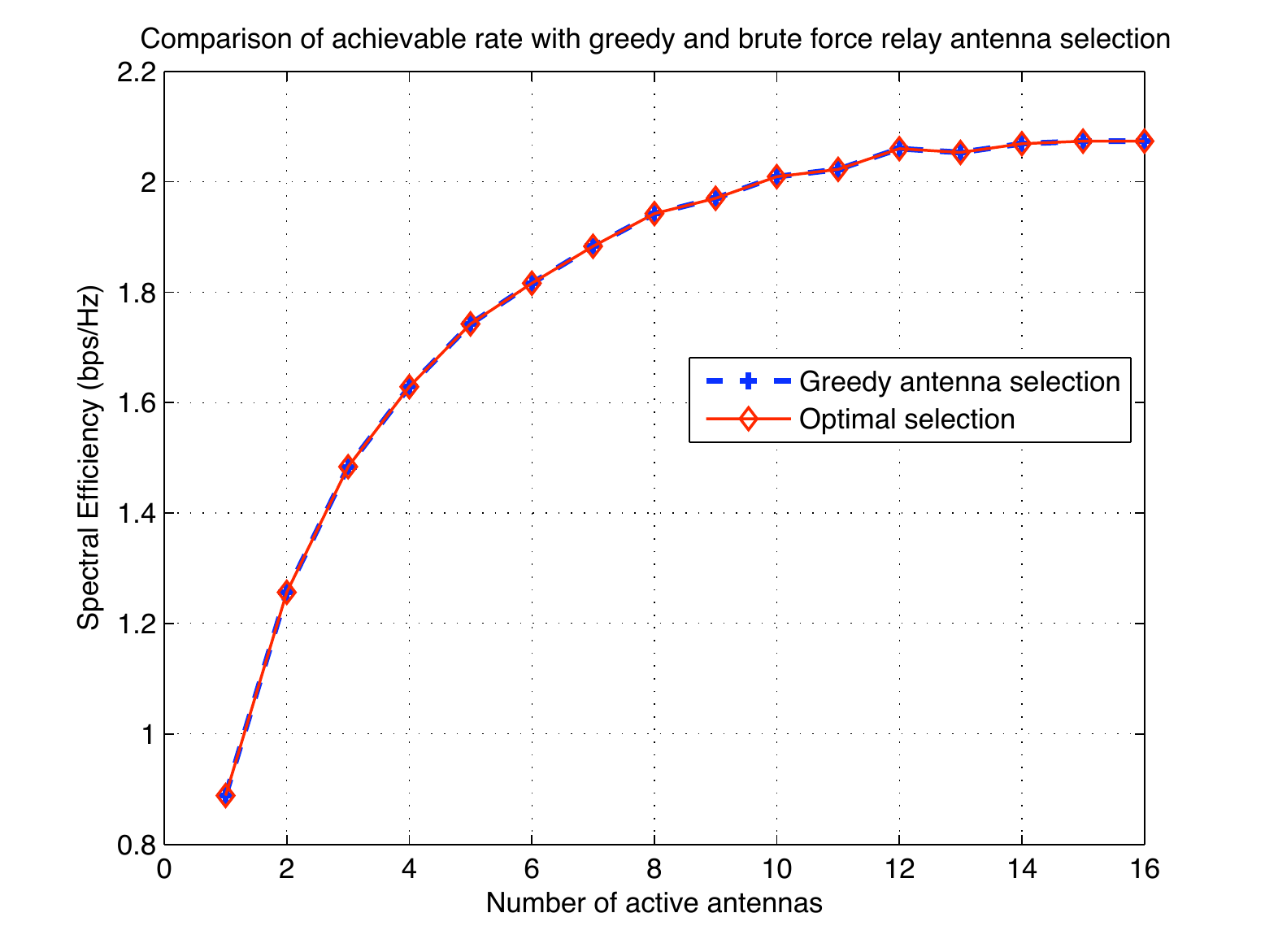}
\caption{Comparison of greedy v/s optimal strategy for relay antenna selection.}
\label{fig:simrelay}
\end{figure}

\section{Conclusion}\label{sec:conc}
In this paper, we used the concept of sub-modular functions to obtain theoretical guarantees on the performance of greedy algorithms for receive antenna selection in point-to-point MIMO channels and relay selection in multiple relay network. The simulated performance of greedy algorithms has been well known in the literature, however, no known theoretical guarantees were available. 
There are many other related challenging antenna selection problems such as: transmit antenna selection in point-to-point MIMO channels for maximizing mutual information, and maximizing the minimum eigen-value \cite{HeathAntSel2001}, relay antenna selection with multiple antennas at the source and the destination with and without channel state information. It is easy to construct numerical examples to show that none of these selection problems, except the two cases considered in this paper, however are sub-modular/modular functions and finding  theoretical bounds on their performance remains an open problem. The numerical examples are not presented here for brevity.

\bibliographystyle{../../IEEEtran}
\bibliography{../../IEEEabrv,../../Research}

\end{document}